\newtheorem{theorem}{Theorem}[section]
\newtheorem{definition}{Definition}[section]
\newtheorem{lemma}{Lemma}[section]
\newtheorem{proof}{Proof}[section]
\newcommand{\ab}[2]{\ensuremath{\langle #1,#2 \rangle}}
\begin{document}

\title{Factorial graphical lasso for dynamic networks}
\author{Ernst Wit, Antonino Abbruzzo\\University of Groningen}
\maketitle

%\address{University of Groningen, Groningen, The Netherlands.}
%\email{Author@emailaddress.com}
%\address{University of Palermo, Palermo, Italy.}
%\email{A.Abbruzzo@rug.nl}

%% Thesis statement
%    Factorial Gaussian graphical models are a new class of models that blend statistical models and graph theory;
%   moreover, they are powerful tools to estimate dynamic networks for high dimensional datasets, such as time
%   course genetic data

\begin{abstract}
Dynamic networks models describe a growing number of important scientific processes, from cell biology and epidemiology to sociology and finance. There are many aspects of dynamical networks that require statistical considerations. In this paper we focus on determining network structure. Estimating dynamic networks is a difficult task since the number of components involved in the system is very large.
As a result, the number of parameters to be estimated is bigger than the number of observations. However, a characteristic of many networks is that they are sparse. For example, the molecular structure of genes make interactions with other components a highly-structured and therefore sparse process.

Penalized Gaussian graphical models have been used to estimate sparse networks. However, the literature has focussed on static networks, which lack specific temporal constraints. We propose a structured Gaussian dynamical graphical model, where structures can consist of specific time dynamics, known presence or absence of links and block equality constraints on the parameters. Thus, the number of parameters to be estimated is reduced and accuracy of the estimates, including the identification of the network, can be tuned up. Here, we show that the constrained optimization problem can be solved by taking advantage of an efficient solver, logdetPPA, developed in convex optimization. Moreover, model selection methods for checking the sensitivity of the inferred networks are described. Finally, synthetic and real data illustrate the proposed methodologies.

% \keywords{Constraint optimization,  Dynamic networks, Gaussian graphical models, Penalized likelihood, Symmetry models, Time-course data.}
\end{abstract}

\section{Introduction}

Graphical models are powerful tools for analyzing relationships between a large number of random variables. Their fundamental importance and universal applicability are due to two main factors. Firstly, graphs can be used to represent complex dependence relationships among random variables. Secondly, their structure is modular which means that complex graphs can be built from many simpler graphs. 	

A graph consists of a set of nodes and a set of links between these nodes. In a graphical model the nodes are associated with random variables and links represent conditional dependence between the nodes. In a Gaussian graphical model (GGM) it is assumed that these random variables follow a multivariate normal distribution. An important property of GGMs is that the concentration matrix, i.e. the inverse of the covariance matrix, fully determines represents the conditional independence structure, i.e. the graph.

The maximum likelihood estimator (MLE) for the concentration matrix is the inverse of the sample covariance matrix, if it exists, and it exits when the number of observations is greater than the number of random variables. However, in many modern applications experiments consists of observing many features and many fewer observations. In genetics, variables are typically thousands of genes or gene products, whereas there are at best only a few hundred of observations. Moreover, genetic networks are sparse, which means many of the genes behave conditionally independent from the others. In terms of GGMs, it means that most of the elements in the precision matrix are equal to zero \citep{jeong2001lethality}. As \cite{dempster1972covariance} pointed out, parameter reduction involves a trade-off between costs and benefits, i.e. the amount of noise in a fitted model due to errors of estimation is reduced but errors of miss-specification are introduced because the null values are incorrect. \cite{dempster1972covariance} introduced sparse covariance matrix estimation. The subset of zeros in the precision matrix can be selected in various ways analogous to the various forward and backward procedures used for selecting predictors variables in multiple regression analysis. However, the forward and backward procedures are not suitable for high-dimensional data problems.

$\ell_1$-penalized likelihood inference \citep{tibshirani1996regression}, which has been extensively applied in regression models, can be adapted to graphical models in order to estimate sparse graphs. The idea is to constrain the $\ell_1$-norm penalty, i.e. the sum of the absolute values of the inverse of the covariance matrix, to be less or equal to a tuning parameter. The smaller the tuning parameter, the more zeroes will be estimated in the precision matrix. \cite{meinshausen2006high} proposed to select edges for each node in the graph by regressing the variable
on all other variables using $\ell_1$ penalized regression. This method reduces to solving $p$ separate regression problems, and does not provide an estimate of the matrix itself. Penalized maximum likelihood approaches using the $\ell_1$  penalty have been considered by \cite{yuan2007model, banerjee2008model, friedman2008sparse, rothman2008sparse}, who have all proposed different algorithms for computing the estimator of the precision matrix.  \cite{fan2001variable} introduced clipped absolute deviation penalty (SCAD). On the other hand, \cite{lam2009sparsistency} extended this penalized maximum likelihood approach to general non-convex penalties. Alternative penalized estimators based on the pseudo-likelihood instead of the likelihood were recently proposed by \cite{peng2009partial}. Theoretical properties of the $\ell_1$ penalized maximum likelihood estimator in the large $p$ scenario were derived by \cite{meinshausen2006high} as well as \cite{rothman2008sparse}. \cite{lam2009sparsistency} established a so called \lq\lq{}sparsistency\rq\rq{} property of the penalized likelihood estimator, which means that all parameters that are zeroes are estimated as zero with probability tending to one when the sample size increases.

The complexity of the GGMs can be reduced if one imposes some symmetry
constraints on the precision matrix, i.e. the number of parameters to be estimated is reduced by introducing equality constraints. Recently, \cite{højsgaard2008graphical} proposed Gaussian graphical models with symmetry constraints. An important motivation for considering structured GGMs is that conditions for maximum likelihood estimates to exist are less restrictive than for standard GGMs. As \cite{højsgaard2008graphical}  pointed out symmetry models are particularly useful when parsimony is needed, for example when estimating precision matrices of large dimension with relatively few observations.

The idea of this paper is to combine symmetry models and graphical lasso for modelling dynamic graphical models. In particular, we propose two models that impose symmetries constraints on the precision matrix $\Theta$ and on the conditional correlation matrix $\Omega$, which is the negative scaled concentration matrix. The latter model is useful when the random variables are not measured on the same scale. We structure graphical models in a way that comes naturally to time-course data, by using the idea of coloured graphs to define factorially-structured precision matrices.

In Section \ref{sec:motivating-examples}, we introduce two motivating examples: a time-course
microarray experiment involving T-cells and an educational experiment involving the Edu dataset.
In Section \ref{sec:fgl}, we introduce the idea of coloured graphs for time-course datasets
and Gaussian graphical models. The factorial graphical lasso for both precision, FGL$_\Theta$, and conditional correlation matrices, FGL$_\Omega$, are discussed in detail and the convex optimization problem with linear constraints is explained in section \ref{sec:pmle}. Section \ref{sec:modelselection} addresses model
selection and parameter smoothing selection which are important issues in graphical lasso. In particular, classical approaches such as AIC, BIC are derived to do model selection, and stability selection \citep{meinshausen2010stability} has been adapted to factorial graphical lasso models. Section \ref{sec:simulationstudy} provides encouraging numerical results in a simulation study. In Section \ref{sec:dataanalysis}, T-cell and Edu datasets are studied in full detail.

\section{Motivating examples}
\label{sec:motivating-examples}
An important issue in system biology is to understand the system of interactions among several biological components such as protein-protein interaction and gene regulatory networks. Hence, several techniques have been developed to collect data from different organisms. For instance, microarrays measure gene expression levels, i.e. the concentration of messenger RNA produced when the gene is transcribed. 	

A single microarrays is a snapshot in time of the expression of genes of an organism. Gene expression, therefore, is a temporal process, which evolves dynamically in response to internal, genomic, and external, environmental, cues.  Even under stable conditions, mRNA is transcribed continuously and new proteins are generated. This process is highly regulated. In many cases, the expression programme starts by activating a few transcription factors, which in turn activate many other genes that act in response to the new state. Transcription factors are proteins that bind to specific DNA sequences, thereby controlling the flow, i.e. transcription, of genetic information from DNA to mRNA. Taking a snapshot of the expression profile following a new condition can reveal some of the genes that are specifically expressed in the new state. But rather than determining the set of differentially expressed genes, such as in the early days of microarray experiments, biologists are more interested in determining the transcriptional programme, i.e. determining the functional pathways. In order to infer the temporal interaction between the genes, it is necessary to perform time-course expression experiments. The time-course genetic T-cell dataset \citep{rangel2004modeling} is described in subsection \ref{microarraydata}.

Many educational studies aim to establish the effect of one or more variables on one or more classroom outcomes. Often such observational studies collect a large number of variables on a large number of disparate scales. Often various proxy measures are available for these variables and therefore it is not uncommon to expect a complex set of relationships between the variables. Our second motivating example is a longitudinal educational study described in \cite{opdenakkerchanges2012, opdenakker2011teacher}. We describe this example in subsection \ref{edudata}. The aim is to find a network that describes the relations among the items measured in the experiment. This second example clarifies the difference between the use of $FGL_\Theta$ and $FGL_\Omega$.

\subsection{Example 1: Human T-cell microarray data}
\label{microarraydata}

Two cDNA microarray experiments were performed to collect gene expression levels for T-cell activation analysis. Activation of T-cell was produced by stimulating the cells with two treatments: the calcium ionosphere and the PKC activator phorbol ester PMA. The human T-cells coming from the cellular line Jakart were cultured in a laboratory. When the culture reached a consistency of $10^6$ cells/ml, the cells were treated with the two treatments PMA and PKC. Gene expression levels for 88 genes were collected for the following times after the treatments: 0, 2, 4, 6, 8, 18, 24, 32, 48, 72 hours.
In the first experiment the microarray was dived such that 34 subarray were obtained. Each of these 34 subarray contained the strands of the 88 genes under investigation. Strands are the complementary base for the rRNA which is the transcribed copy of a single strand of DNA after the process of transcription. In the second microarray experiment the microarray was dived into 10 sub-arrays.  Each of these 10 sub-arrays contained the strands of the 88 genes under investigation. Each microarray was composed by ten different slides which were used for the two experiments to collect temporal measurements.  For time 0 a set of cells were hybridized to the first slide after the cell cultured reached the right density and before the treatments were applied. For the second time point (time 2), another set of cells were hybridized in a second slide. The experiment was conducted by \citep{rangel2004modeling}.

At this point we assume that the technical replicates are independent samples, and that the temporal replicates are dependent replicates from the same samples. These two assumptions result in a dataset with 44 independent replicates. These are strong assumptions, but can be justified from the underlying sampling scheme. Nevertheless, this means that the conclusions from the analysis on T-cell, shown in section \ref{sec:dataanalysis}, should be critically considered. Two further steps were conducted by \cite{rangel2004modeling} to obtain a set of genes that were highly expressed and normalized across the two microarrays. Firstly, thirty genes with high variability between the two microarrays and within the same time point were removed.  Secondly, normalization methods were applied to remove systematic variation due to experimental artifacts. The normalization method used by \cite{rangel2004modeling} is described in the paper written by \cite{bolstad2003comparison}. This results in a 44 i.i.d. $10\times 58$ dimensional observations. 

In this subsection, we describe a possible partition for the maximum likelihood estimator of the precision matrix that is the inverse of the sample variance-covariance matrix. For illustrative purposes, we select a subset of 4 genes $\Gamma = \{ZFN, CGN, SIV, SCY \}$ across 2 time points $T = \{1, 2 \}$. Then,  we compute and show the empirical concentration matrix $\mathbf{S}^{-1}$ for the selected subset based on 44 observations in Table \ref{concentrationo}. In particular, $\mathbf{S}^{-1}$ can be partitioned as follows,
\begin{eqnarray}
\label{eq1}
\mathbf{S}^{-1} &=& \mathbf{I}_{\Gamma T}\circ \mathbf{S}^{-1} +
\left( \begin{array}{rl} \mathbf{0} & \mathbf{I}_{\Gamma \Gamma}\\ \mathbf{I}_{\Gamma \Gamma} & \mathbf{0} \\ \end{array} \right) \circ \mathbf{S}^{-1}  \nonumber \\
&& \left( \begin{array}{rl} \mathbf{D}_{\Gamma \Gamma} &\mathbf{0} \\ \mathbf{0} &\mathbf{D}_{\Gamma \Gamma} \\ \end{array} \right) \circ \mathbf{S}^{-1} +
\left( \begin{array}{rl} \mathbf{0} & \mathbf{D}_{\Gamma \Gamma} \\ \mathbf{D}_{\Gamma \Gamma} & \mathbf{0} \\ \end{array} \right)\circ \mathbf{S}^{-1},
\end{eqnarray}
where $\circ$ is the elementwise Hadamard matrix product, $\mathbf{I}_{\Gamma T}$,  $\mathbf{I}_{\Gamma \Gamma}$ are identity matrices, and $\mathbf{D}_{\Gamma \Gamma}$ is a square matrix with ones off the diagonal.

At this point we notice that four terms are showed in the summation and each of these terms can be interpreted. The first term indicates how well the variance of $(gene_{ij})_{i\in \Gamma,j \in T}$ is predicted given the rest. For example it is easier to predict the behaviour of gene SIV than the behaviour f gene SCY. The second shows self-self conditional independence at temporal lag 1. The self-self conditional independence represent relationships between the same genes measured into different time points. For example, gene ZNF is conditional dependent with itself given the rest of the genes and the dependence seems to be very strong $-0.93$. So, if gene ZNF is upregulated at time 0 then it will be upregulated at time 1. The third term indicates the conditional dependencies at time 0 and 1 between genes at temporal lag zero. For the t-cell experiment this means that one is considering the relations between genes before the effect of the treatment. For example, gene CCN and SIV can be considered conditionally independent ($0.02$ and $0.06$) since for both the temporal lag the element of the precision matrix is small. The last term indicates the conditional dependencies between time 0 and 1 at lag zero. In Section \ref{sec:fgl} we will refer to these partitions of the concentration matrix as natural partitions. Moreover, we will use the idea of coloured graphs to make a different partition for the different temporal lags.

\begin{table}
\caption{\label{concentrationo}Maximum likelihood estimator of the precision matrix or empirical concentration matrix $\mathbf{S}^{-1}$ based on 44 replicates for 4 genes measured across 2 time points. The number of genes measured in the T-cell experiment was 58 across 10 time points but we randomly selected four genes to give an illustration of the estimated precision matrix. The lower part of the matrix can be read from the upper part since the matrix is symmetric.}
\centering
\fbox{%
\begin{tabular}{rrrrrr|rrrr}
 Time &\multicolumn{5}{c}{ \em 1} &\multicolumn{4}{c}{ \em 	2}  \\
% Time  & &  &\em T1 &  & &  &\em T2 & &\\
 &  Gene & \em ZNF &\em CCN &\em SIV &\em SCY & \em ZNF &\em CCN &\em SIV &\em SCY \\
  \hline
\multirow{4}{*}{\em 1} & \em ZNF & 1.38 & -0.05 & -0.50 & 0.25 & -0.20 & -0.12 & -0.01 & -0.11 \\
&\em CCN & - & 1.58 & 0.02 & -0.39 & -0.12 & -0.93 & -0.02 & 0.07 \\
&\em SIV & - & - & 1.61 & -0.13 & -0.17 & 0.12 &-0.78 & -0.02 \\
&\em SCY & - & - & - & 1.29 & 0.05 & 0.25 & 0.49 & -0.09 \\
\hline
\multirow{4}{*}{\em 2} & \em ZNF  & - & - & - & - & 1.10 & -0.18 & 0.04 & 0.08 \\
&\em CCN  & - & - & - & - & - & 1.78 & 0.06 & 0.42 \\
&\em SIV & - & -& - & - & - & -& 1.58 & 0.09 \\
&\em SCY & - & - & - & - & - & - &- & 1.16 \\
\end{tabular}}
\end{table}

\subsection{Example 2: Educational study.}
\label{edudata}

The education dataset is a longitudinal dataset in which a set of scores regarding teacher and student behaviour were collected across 5 different time points, to wit $0, 1, 4, 7, 10$ months. The study was conducted in the Netherlands and the students were followed-up during their first year of the secondary school \citep{opdenakkerchanges2012, opdenakker2011teacher}. In particular, 20 classes of students and 24 educational scores were considered.

For illustrative purposes, we consider a subset of four scores by naming the variables: Controlled (Contr), Autonomies (Auton), Influence (Infl) and Proximity (Prox). The first two scores concern student behaviour, taking values between -3 and 3. The last two scores measure teacher behaviour and take values between 0 and 5. The scale for these variables is not compatible and conditional correlations are therefore more meaningful than the concentrations. In section \ref{sec:fgl}, we show that conditional correlations are invariant under changes of scale for individual variables. The empirical conditional correlations based on 20 classes of students for 4 scores measured across 2 time points are shown in the upper triangular block (italic numbers) of the matrix in Table \ref{correlationo} while the conditional covariances are shown in the lower triangular block.

\begin{table}
\caption{\label{correlationo}Empirical conditional correlations (upper triangular and diagonal) and conditional covariance (lower triangular) based on 20 replicates for 4 score measures measured across 2 time points. The number of educational measures in the experiment was 24 across 4 time points but we randomly selected four measures to give an illustration of the estimated precision matrix and the estimated scaled precision matrix.}
\centering
\fbox{%
\begin{tabular}{rrrrrr|rrrr}
 Time &\multicolumn{5}{c}{ \em 1} &\multicolumn{4}{c}{ \em 	2}  \\
& Subject& \em Contr & \em Auton & \em Infl & \em Prox & \em Contr & \em Auton & \em Infl &\em Prox \\
  \hline
\multirow{4}{*}{\em 1} & \em Contr& 1.79  & \em 0.31 & \em -0.07 &\em 0 &\em 0.61 &\em -0.23 &\em 0.18 &\em -0.06 \\
& \em  Auton &-0.57 & 1.86  & \em 0.03 &\em 0.22 &\em -0.19 &\em 0.57 &\em -0.07 &\em -0.03 \\
& \em Infl &  0.11 & -0.05 & 1.25  &\em -0.04 &\em 0 &\em -0.01 &\em 0.41 &\em 0.09 \\
& \em  Prox & 0.01 & -0.37 & 0.05 & 1.55 &\em 0.04 &\em -0.03 &\em -0.12 &\em 0.51 \\
  \hline
\multirow{4}{*}{\em 2} & \em   Contr2 &  -1.09 & 0.35 & 0 & -0.07 & 1.76  &\em 0.31 &\em -0.08 &\em -0.01 \\
& \em   Auton & 0.43 & -1.09 & 0.02 & 0.05 & -0.58 & 1.96  &\em 0.13 &\em 0.21 \\
& \em Infl & -0.28 & 0.11 & -0.53 & 0.18 & 0.12 & -0.21 & 1.37  &\em 0.21 \\
& \em Prox & 0.10 & 0.06 & -0.13 & -0.82 & 0.02 & -0.39 & -0.32 & 1.68  \\
\end{tabular}}
\end{table}

Similarly the example in the preview Subsection \ref{microarraydata} we can partition the empirical conditional correlation matrix in 4 parts as follows:
\[
\mathbf{R}^{-1} = -\mathbf{I}_{\Gamma T} +
\left( \begin{array}{rl} \mathbf{0} & \mathbf{I}_{\Gamma \Gamma}\\ \mathbf{I}_{\Gamma \Gamma} & \mathbf{0} \\ \end{array} \right) \circ \mathbf{R}^{-1}
\left( \begin{array}{rl} \mathbf{D}_{\Gamma \Gamma} &\mathbf{0} \\ \mathbf{0} &\mathbf{D}_{\Gamma \Gamma} \\ \end{array} \right) \circ \mathbf{R}^{-1} +
\left( \begin{array}{rl} \mathbf{0} & \mathbf{D}_{\Gamma \Gamma} \\ \mathbf{D}_{\Gamma \Gamma} & \mathbf{0} \\ \end{array} \right)\circ \mathbf{R}^{-1},
\]	
were $R^{-1}$ is equal to the negative of the scaled precision matrix. This means that the conditional elements on the lower part have been divided by the square root of the two corresponding elements on the diagonal $r_{ij} = -s_{ij}/\sqrt{s_{ii} s_{jj}}$. At this point each of diagonal elements is in the range $-1$ and $1$ and it corresponds to a conditional correlation measure. The interpretation of these measures is equivalent to the interpretation of the elements of the precision matrix. However, conditional correlations are pure numbers, i.e. the scale of the variables do not have any influence, and they are in a range between $-1$ and $1$ so that numbers close to zero indicates conditional independence, as before, and numbers close to one or -1 indicate strong positive and negative conditional dependencies, respectively. Whereas of the conditional covariance elements a positive conditional correlation indicate a positive relations between the corresponding genes.

\section{Factorial graphical model for dynamic networks}
\label{sec:fgl}

An undirected graph $G = (V, E)$, where $V$ is a set of vertices corresponds to a set of random variables, and $E$ the set of links corresponds to a set of conditional independencies, is defined a Gaussian graphical model if the set of random variables follow a Gaussian distribution. Vertices of the graph $G$ in static dataset have no natural order. Whereas for longitudinal dataset or time-course dataset a natural order can be established. For example, gene ZNF in the T-cell dataset is measured at time 1 (which is gene expression are collected after 0 hour of treatment) and at time 2 (which is after two hours of treatment). Vertexes ZNF, CCN, SIV and SCY are called natural vertexes. Generally speaking, let $G = (V, E)$ be a graph where $V = (v_{jt})_{j \in \Gamma, t \in T}$ is a finite set of vertices and $E \subseteq V \times V$ is a subset of ordered pairs of distinct vertices. The set of natural vertices is $\Gamma = \{1,\ldots, n_{\Gamma} \}$, where $T = \{1,\ldots, n_T\}$ is an ordered set, typically describing time points. Note that the set $\Gamma$ is unlabelled, but here described as an ordered set for convenience of notation. Moreover, we have seen in example 1 and 2 that the empirical correlation matrix or the empirical concentration matrix can be written as a sum of four terms. These four terms are called natural partitions, and they can be represented in a dynamic graph as shown in Figure \ref{figgrahex}. Here, we assume that the last term of formula \ref{eq1} is zero so that genes at time 1 and at time 2 are assumed conditionally independent. The first term of formula \ref{eq1} is not represented in Figure \ref{eq1}. The second term is represented by the dot lines and it represents the self-self interactions. The third term is represented by the dashed and continuous lines.

\begin{figure}
\centering
 \makebox{\includegraphics[scale = 1]{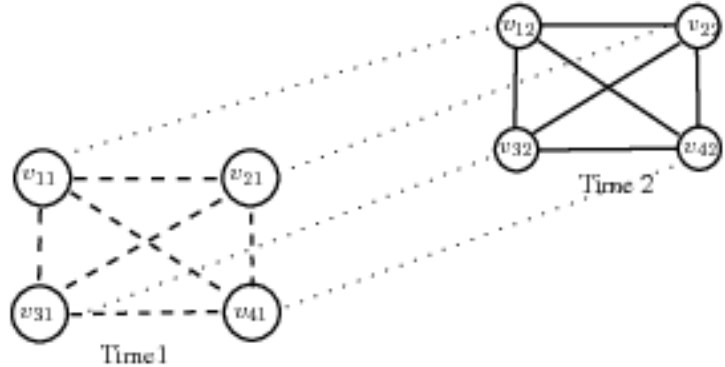}}
\caption{\label{figgrahex}Example of a dynamic coloured graph with four natural vertices measured across two time points. The natural partitions of equation \ref{eq1} have been represented with different type of the lines. The first term is not represented, The second term is represented by dot lines. The third term is represented by dashed and continuous lines.}
\end{figure}

We will refer to the graph in Figure  \ref{eq1} as dynamic graphs, which can be formally defined as follows:
\begin{definition}[Dynamic graphs]
A  dynamic graph is a pair $G = (V, E)$, where $V = \left\{v_{ij}\right\}_{i\in \Gamma, j\in T}$ is a finite set of vertices and $E \subseteq V \times V$ is a set of ordered couples of elements.  $\Gamma$ and $T$ are finite sets.
\end{definition}
Hence, the main characteristic of dynamic graphs is that the same vertices are measured across different time points. In the T-cell example the same 56 genes were measured across 10 time points and between time point $t$ and $t+1$ the experimental conditions were changed. %For a dynamic graph we draw a directed link if $\{(v_{ij}, v_{kl})\} \in E$  and $j < l$, i.e. $v_{ij} \rightarrow  v_{kl}$. A undirected link is drawn if $\{(v_{ij}, v_{kl}), (v_{kl}, v_{ij})\} \in E$, i.e. $v_{ij} \leftrightarrow  v_{kl}$.
We notice that conditional correlations between the natural vertices at time point one and time point 2 described in Table \ref{correlationo} are similar. On the end the diagonal elements in the first block of Table \ref{concentrationo} and in the second block are similar. Moreover, it is intuitive to think that the conditional correlation between gene 1 at time 1 and gene 1 at time 2 can be considered equivalent or that self-self correlations between genes across two time points can be consider to be equal. In order to formalize this idea we use of coloured graphs.

\begin{definition}[Coloured Graph]
A  coloured graph $\tilde{G} = (V,  E, F)$ is a triplet, where
$G = (V,E)$ is a graph and $F$ is a mapping on the links, i.e.
\[F:E\longrightarrow C,\]
where $C$ is a finite set of colours.
\end{definition}
Thus, a coloured graph induces partitions on the graph by $F$, that is different subsets of $E$ are visualized with different colours. For example, three different colours (i.e, three different type of lines) are represented in Figure \ref{figgrahex}. The dash line represents one colour (or sub-partition), the continuous line represents the second colour and the dot line represents the third colour.
Let us denote the partitions induced by the coloured graphs $F$ by $E \prec F$, i.e.:
$$E \prec F = \cup_{c \in C} F^{-1} \cap E,$$
where $E \prec F $ indicates that the partition is induced on $E$, and $E$ stands for the complete set of links. The mapping can be applied to the subsets of $E$ which are $S_i$ and $N_i$. Each partition represents relationships between natural vertices at some time point $t \in T$.

Let $\{S_i\}_{i = 0}^{n_T -1}$, and $\{N_i\}_{i = 0}^{n_T -1}$ be subsets of vertices and links where $S_i, N_i$ are natural partitions such that:
\[S_i = \{ \{(v_{jt}, v_{j,t+i}), (v_{j,t+i}, v_{jt})\} | j \in \Gamma, t = 1, \ldots, n_T-i\},\]
and
\[N_i = \{ \{(v_{jt}, v_{k,t+i}),(v_{k,t+i}, v_{jt})\} | \forall j \neq k \in \Gamma, t = 1, \ldots, n_T-i\}.\]
Each of these partitions is interpreted as follows: $S_i$ considers the lag $i$ interactions between the same natural vertices, and $N_i$ is a graph at time lag $i$. We induce further partitions on $S_i$ and $N_i$ by using the idea of coloured graphs in order to give more consistent interpretations of dynamic graphs. In particular, we consider four mappings. Firstly, $E \prec F_1$ indicates that all edges in the partition are coloured with the same colour. Secondly, $E \prec F_T$ indicates that all edges in the partition are coloured with colours which are the same within natural vertices. Thirdly, $E \prec F_{\Gamma}$ indicates that all edges in the partition are coloured with the same colour within time points and different colours across natural vertices. Finally, $E \prec F_{\Gamma T}$ indicates that all edges in the partition are coloured differently across time points and natural vertices. This can be summarize with the following functions:
\begin{eqnarray*}
F_1:&& E \rightarrow C\\ 
&&\forall v_i,v_j \in E: ~~F_1(v_i) = F_1(v_j)\\
F_T:&& E  \rightarrow C\\
 &&\forall v_{it,js}, v_{ku, lv} \in E, \mbox{if } t=u  \mbox{ and }  s=v:\\
 &&F_T(v_{it,js}) = F_T(v_{ku, lv})\\
F_{\Gamma}:&& E  \rightarrow C\\ 
&&\forall v_{it,js}, v_{ku, lv} \in E, \mbox{if }  i = k \mbox{ and }  j = l:\\
&&F_{\Gamma}(v_{it,js}) = F_{\Gamma}(v_{ku, lv})\\
F_{\Gamma T}:&& E  \rightarrow C\\
 &&\mbox{no restriction}
 \end{eqnarray*}
where we can substitute at $E$ a specific natural partition $S_i$ or $N_i$ for $i = 1, \ldots, T-1$. Thus, $S_i \prec F_i$ and $N_i \prec F_j$, where $i,j = F_1, F_T, F_{\Gamma}, F_{\Gamma T}$ induce partitions as follows:
$$S_i = \{S_i^m\}_{m= 1}^{s_i},$$
and
$$N_i =\{N_i^m\}_{m=1}^{n_i}.$$
For example, $S_i \prec F_T$ induces the following partition of $S_i$:
\[S_i=\{S^1_i \cup \ldots \cup S^{n_T-i}_i \},\]
where $S^t_i = \{\{(v_{jt}, v_{j,t+i}), (v_{j,t+i}, v_{jt})\} | j \in \Gamma, t = 1, \ldots, n_T-i\}$. Edges belonging to $S^t_i$ are
%$S^t_i = \{v_{jt,jt-i} v_{jt,jt+i}| j \in \Gamma, \ \ \mbox{if} \ \ t-i \in T \ \ \mbox{and/or} \ \ t+i \in T\}$
coloured with the $n_{T}-i$ colours $C = \{C_1, \ldots, C_{n_T-i} \}$, respectively. We abuse notation and let $N_i \prec 0$ and $S_i \prec 0$ in order to express that $N_i = \{\emptyset\}$ and $S_i = \{\emptyset$\}, i.e. the graph $G$ does not contain such edges.

In Table \ref{degreeoffreedom} we show the number of colours for any combination of $S_i, N_i$ and graph colouring.
\begin{table}
\caption{\label{degreeoffreedom}Number of colours for any combination of $S_i, N_i$ and graph colouring. $S_i, N_i$ represent natural partitions of $E$, where $E$ is a set of links. The natural partitions are sub partitions of the set $E$.}
\centering
\fbox{%
\begin{tabular}{ccccccc}
Factor&$S_0$&$N_0$&$S_1$&$N_1$&$S_2$&$N_2$\\
\hline
$F_1$& 1&1  &1  &1  &1   &1\\
$F_T$ &$n_{T} $&$n_{T}$&$n_{T}-1$&$n_{T}-1$&$n_{T}-2$&$n_{T}-2$\\
$F_{\Gamma}$ & $n_{\Gamma} $& $\frac{1}{2}n_{\Gamma}(n_{\Gamma}-1)$ & $n_{\Gamma} $& $n_{\Gamma}(n_{\Gamma}-1)$& $n_{\Gamma} $& $n_{\Gamma}(n_{\Gamma}-1)$\\
$F_{\Gamma T}$ & $n_{\Gamma}n_{T} $& $\frac{1}{2}n_{\Gamma}(n_{\Gamma}-1)n_T$ &$ n_{\Gamma}(n_{T}-1)$& $n_{\Gamma}(n_{\Gamma}-1)$ & $n_{\Gamma}(n_{T}-2)$ & $n_{\Gamma}(n_{\Gamma}-1)$\\
 &  &    &  &$\times (n_{T}-1)$  & & $\times (n_{T}-2)$\\
\end{tabular}}
\end{table}
The total number of colours $n_C$ can be calculate from Table \ref{degreeoffreedom}.
Figure \ref{figgrahex} shows an example of  ``coloured graph'' where vertices $(v_{ij})_{i \in \Gamma, j \in T}$ are all of the same colour and edges with the same line styles are of the same colours. The graph resembles the following model:
\begin{equation}
[S_0 \prec F_1, N_0 \prec F_T, S_1 \prec F_1, N_1 \prec 0],
\label{eq:modelprec}
\end{equation}
where, firstly the following natural partitions (sub-partitions) are created:
\begin{eqnarray}
\nonumber
S_0 &=&\{v_{11}, v_{12}, \ldots, v_{pT}\},\\
\nonumber
S_1&=& \{(v_{11}, v_{12}), (v_{21}, v_{22}), (v_{31}, v_{32}), (v_{41}, v_{42})\}, \\
\nonumber
N_0 &=& \{(v_{11}, v_{21}),  (v_{21}, v_{41}), (v_{41}, v_{31}), (v_{31}, v_{11}),  (v_{11}, v_{41}), (v_{21}, v_{31}),\\
\nonumber
&=& (v_{12}, v_{22}),  (v_{22}, v_{42}), (v_{42}, v_{32}), (v_{32}, v_{12}),  (v_{12}, v_{42}), (v_{22}, v_{32})\}.
\end{eqnarray}
Note that if a couple $(v_{ij}, v_{kl})$ is present then the couple $(v_{ij}, v_{kl})$ is present too. We have omitted the symmetric couples from the sets to simplify the notation. Secondly, $S_0 \prec F_1$ induces the following subset:
$$S^1_0 = \{v_{11}, v_{12}, \ldots, v_{pT}\},$$
so that a colour is created. Then, $N_0 \prec F_T$ brings the following two sub-partitions:
\begin{eqnarray}
\nonumber
N^1_0 &=&  \{(v_{11}, v_{21}),  (v_{21}, v_{41}), (v_{41}, v_{31}), (v_{31}, v_{11}),  (v_{11}, v_{41}), (v_{21}, v_{31})\},\\
\nonumber
N^2_0 &=&  \{(v_{12}, v_{22}),  (v_{22}, v_{42}), (v_{42}, v_{32}), (v_{32}, v_{12}),  (v_{12}, v_{42}), (v_{22}, v_{32})\},
\end{eqnarray}
so that two colours have been created.

In order to consider a graphical model nodes of the dynamic graph need to be associated with the random variables. Then, we can assume that nodes follow a multivariate normal distribution. In what follows, we consider the set of vertices and nodes of the dynamic coloured graph. Moreover, we denote with $\mathbf{Y} = (Y_{v_{ij}})_{v_{ij} \in V} \in \mathbb{R}^{\Gamma T}$ the set of random variables. Each vertices $V = (v_{ij})_{i \in \Gamma,j \in T}$ is related to a random variable $Y_{v_{ij}}$.
\begin{definition}[Dynamic graphical models]
A graphical model $M = (G, \mathbb{P})$ is a couple $G$ and $\mathbb{P}$, where $G$ is a dynamic graph and $\mathbb{P}$ is a probability distribution on $\mathbf{Y}$ satisfying some Markovian properties, i.e. set of conditional independence relations encoded by the undirected edges $E$.
\end{definition}

In order to connect the idea of coloured graphs and graphical models we give the following definition:
\begin{definition}[Factorial dynamic graphical models]
A factorial graphical model $M = (G, \mathbb{P}, F)$ is a triplet $(G, \mathbb{P}, F)$, where $G$ is a dynamic graph, $\mathbb{P}$ is a probability distribution and $F$ is a mapping on the natural partitions $S_i$ and $N_i$, for $i= 1, \ldots, T-1$ and $T$ is the total number of time points.
\end{definition}
Assume that $\mathbf{Y} \sim N(\boldsymbol \mu, \boldsymbol \Sigma)$ follows a multivariate normal distribution then we have a factorial Gaussian graphical models where pairwise conditional independencies are equivalent to zeros in the concentration matrix $\boldsymbol \Theta = \boldsymbol \Sigma^{-1}$, i.e:
\[Y_{v_{ij}} \perp Y_{v_{kl}}|\mathbf{Y}_{V\setminus\{v_{ij},v_{kl}\}} \Leftrightarrow \theta_{\{ij,kl\}}=0.\]
The scaled off diagonal elements $\omega_{ij,kl|V \setminus\{ij, kl\}}  = -\frac{\theta_{ij,kl}}{\sqrt{\theta_{ij,ij}\theta_{kl,kl}}}$ are the negative of conditional correlation coefficients, where $i, k \in \Gamma$ and $j, l \in T$ for $i \neq j$ and $k \neq l$. For future use, we denote $\boldsymbol \Omega = (\omega_{ij,kl|V\setminus \{ij,kl\}})$ be a matrix of scaled elements of $\boldsymbol \Theta$.

For example, the factorial Graphical model with graph represented in Figure \ref{figgrahex}, $\mathbf{Y} = (Y_{11}, Y_{12}, \ldots, Y_{42})$ vector of random variables which correspond to vertices $\{v_{11}, v_{12}, \ldots, v_{42}\}$ and relations $[S_0 \prec F_1, N_0 \prec F_T, S_1 \prec F_1, N_1 \prec 0]$ given in (\ref{eq:modelprec}) imply the following precision matrix:
\[ \boldsymbol \Theta = \left[ \begin{array}{ccc|ccc}
 \theta_1 & \theta_2   &   \theta_2      &    \theta_3    &  0 & 0        \\
 \theta_2      & \theta_1 & \theta_2  &   0      &  \theta_3 & 0   \\
 \theta_2  &   \theta_2    & \theta_1 &  0       &     0    & \theta_3        \\
 \hline
 \theta_3  &0   &  0 & \theta_1 &   \theta_4     &    \theta_4      \\
   0      & \theta_3       &  0 & \theta_4   & \theta_1 & \theta_4   \\
   0     &    0    & \theta_3 &  \theta_4   &    \theta_4  &   \theta_1
  \end{array} \right]. \]

We have described the idea of coloured graphs which allows us to create sub-partitions of natural partitions. Moreover, equality constraints on conditional correlations can be imposed such that edges with the same colours imply these equality restrictions. Here, we define a set of design matrices $\mathbf{X}$ which is useful to directly connect elements of a coloured graph with concentration or conditional correlation matrix parameters.

A coloured graph that defines partitions on $E$, i.e. $\{S^m_i\}$ and $\{N^m_i\}$, can be associated with two sets of design matrices $\mathbf{X}^S = \{\mathbf{X}^{S^m_i}\}_{i = 0, m= 1}^{n_{\Gamma-1}, s_i}$ and $\mathbf{X}^N = \{\mathbf{X}^{N^m_i}\}_{i = 0, m = 1}^{n_{\Gamma-1},n_i}$ where $\mathbf{X}^{S^m_i}, \mathbf{X}^{N^m_i} \in \mathbb{R}^{\Gamma T}$ and $\mathbf{X}^{S^m_i}$ can be uniquely identified such that
$$x^{S^m_i}_{jt,gs} =  \left\{ \begin{array}{rl}
 1 &\mbox{ if }  (v_{jt}, v_{gs}) \in S^m_i\\
 0 &\mbox{ otherwise}
       \end{array} \right.
$$
and
$$x^{N^m_i}_{jt,gs} =  \left\{ \begin{array}{rl}
 1 &\mbox{ if } (v_{jt}, v_{gs}) \in N_i^m \\
 0 &\mbox{ otherwise}
\end{array} \right.
$$
We re-define the set of design matrices as:
\[
\mathbf{X} = \{\mathbf{X}^S, \mathbf{X}^N\} = \{\mathbf{X}_1, \ldots, \mathbf{X}_{n_c}\},
\]
where $\mathbf{X}^S = \sum_{i=1}^{n_S} \mathbf{X}^{S_i^m}$, $\mathbf{X}^N_i = \sum_{i=1}^{n_N} \mathbf{X}^{N_i^m}$, $n_c = n_S + n_N$ is the total number of colours, $n_S$ is the total number of colours for the self-self partitions and $n_N$ is the total number of colours for the network partitions.

\subsection{Structured Gaussian graphical models for conditional concentration matrix}

The design matrix $\mathbf{X}$ can be used to induce the following parametrization on $\boldsymbol \Theta$, i.e.
\[\boldsymbol \Theta = \sum^{n_C}_{m = 1} \mathbf{X}^{(m)} \theta_m.\]
where $\boldsymbol \theta = (\theta_m)_{m = 1}^{n_C}$ is a vector of unknown parameters. Because $\mathbf{X}_m$ are induced by the graph the specific elements of the concentration matrix which correspond to edges with the same colours are constraint to be equal.
Note that the restrictions so defined are linear in the concentration matrix.

\paragraph{Example 1: Human T-cell.}Consider the following factorial graphical model for human T-cell microarray data (see Section \ref{sec:motivating-examples} for a description):
$$[S_0 \prec F_1, N_0 \prec F_1, S_1 \prec F_1, N_1 \prec 0],$$
then the design matrices $\mathbf{X}= \{\mathbf{X}^S,\mathbf{X}^N \}= \{\mathbf{X}_1,\mathbf{X}_2,\mathbf{X}_3\}$ are:
\[
\mathbf{X}^{S^1_0} = \mathbf{I},  \mathbf{X}^{S^1_1} =  \left( \begin{array}{rl} \mathbf{0} &\mathbf{I}\\ \mathbf{I} &\mathbf{0} \\ \end{array} \right), \mathbf{X}^{N^1_0} = \left( \begin{array}{rl} \mathbf{D} &\mathbf{0} \\ \mathbf{0} &\mathbf{D} \\ \end{array} \right), \mathbf{X}^{N^1_1} = \left( \begin{array}{rl} \mathbf{0} &\mathbf{0} \\\mathbf{0} &\mathbf{0} \\ \end{array} \right)
\]
where $\mathbf{D}$ is a square matrix with 1 off-diagonal and 0 on the diagonal. Note that $\mathbf{X}^{N^1_1}$ is an empty matrix since $S^1_1$ is an empty set which implies 0 colours. Table \ref{estconcentration} shows the estimated concentration matrix based on 44 replicates for 4 genes measure across 2 time points. The model induces $n_C = 3$ colours. For the estimation procedure see \cite{højsgaard2008graphical}. For a faster and more general estimation procedure see the algorithm in subsection \ref{sec:Penalized likelihood for coloured graphs}.

\begin{table}
\caption{\label{estconcentration}Estimated conditional covariance based on 44 replicates for 4 genes measured across 2 time points.  The number of parameter estimated is 3, that is the number of colour in the graph is 3. Note that the following model $S_0 \prec F_1, N_0 \prec F_1, S_1 \prec F_1$ with the rest $S_i, N_i \prec 0$ has been imposed.}
\centering
\fbox{%
\begin{tabular}{rrrrrr|rrrr}
 Time &\multicolumn{5}{c}{ \em 1} &\multicolumn{4}{c}{ \em 	2}  \\
  &Gene & \em ZNF &\em CCN &\em SIV &\em SCY & \em ZNF &\em CCN &\em SIV &\em SCY \\
  \hline
&\em ZNF &1.11 & 0.01 &0.01 &0.01 &  -0.42 & 0 & 0&0 \\
&\em CCN & 0.01 & 1.11 & 0.01 & 0.01 &  0 & -0.42 & 0 & 0 \\
  &\em SIV  & 0.01 & 0.01 & 1.11 &  0.01 & 0 & 0 & -0.42 & 0 \\
 & \em SCY  & 0.01 & 0.01 & 0.01 & 1.11 & 0 & 0 & 0 & -0.42\\
   \hline
  &\em ZNF & -0.42 & 0 & 0 & 0 & 1.11 &  0.01 & 0.01 & 0.01 \\
  &\em CCN & 0 & -0.42 & 0 & 0 & 0.01 & 1.11 &  0.01 &  0.01 \\
  &\em SIV  & 0 & 0 & -0.42 & 0 & 0.01 & 0.01 & 1.11 &  0.01 \\
  &\em SCY & 0 & 0 & 0 & -0.42 & 0.01 & 0.01 & 0.01 & 1.11 \\
\end{tabular}}
\end{table}

\subsection{Structured Gaussian graphical models for conditional correlation matrix}

It is generally important that all variables are on comparable scales if a structured model on the concentration matrix is considered so that conclusions are interpretable. In contrast, model based on the conditional correlation matrix have proprieties of invariance under rescaling as showed in Lemma \ref{Invariance}.
\begin{lemma}
\emph{(Invariance)}
\label{Invariance}
The inverse variance matrix $\boldsymbol \Theta$ is not invariant under rescaling $\mathbf{Y}$ by $\mathbf{D}$. Let $\mathbf{D}$ be a diagonal matrix with diagonal entries the scaled precision matrix $\boldsymbol \Omega$ is invariant.
\end{lemma}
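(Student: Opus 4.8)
The plan is to track how each matrix transforms under the linear rescaling $\tilde{\mathbf{Y}} = \mathbf{D}\mathbf{Y}$, where $\mathbf{D}$ is the diagonal matrix with positive diagonal entries $d_1,\ldots,d_{\Gamma T}$, and then compare the two parametrizations. It is convenient to index the $\Gamma T$ coordinates by single labels $a,b$ rather than the paired labels $ij,kl$. First I would record the effect on the covariance: since $\mathbf{Y}\sim N(\boldsymbol\mu,\boldsymbol\Sigma)$ and $\mathbf{D}$ is symmetric, the rescaled vector satisfies $\tilde{\mathbf{Y}}\sim N(\mathbf{D}\boldsymbol\mu,\,\mathbf{D}\boldsymbol\Sigma\mathbf{D})$, so the new covariance is $\tilde{\boldsymbol\Sigma} = \mathbf{D}\boldsymbol\Sigma\mathbf{D}$.

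Next I would invert to obtain the transformation law for the precision matrix. Because $\mathbf{D}$ is invertible, $\tilde{\boldsymbol\Theta} = \tilde{\boldsymbol\Sigma}^{-1} = \mathbf{D}^{-1}\boldsymbol\Sigma^{-1}\mathbf{D}^{-1} = \mathbf{D}^{-1}\boldsymbol\Theta\mathbf{D}^{-1}$, which entrywise reads $\tilde\theta_{ab} = \theta_{ab}/(d_a d_b)$. This already settles the first assertion: unless $\mathbf{D}=\mathbf{I}$, the entries of $\boldsymbol\Theta$ are altered, so $\boldsymbol\Theta$ is not invariant under rescaling.

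Finally I would substitute this law into the defining relation $\omega_{ab} = -\theta_{ab}/\sqrt{\theta_{aa}\theta_{bb}}$ and verify that the diagonal rescaling cancels. The diagonal terms transform as $\tilde\theta_{aa}=\theta_{aa}/d_a^2$ and $\tilde\theta_{bb}=\theta_{bb}/d_b^2$, so with $d_a,d_b>0$ one has $\sqrt{\tilde\theta_{aa}\tilde\theta_{bb}} = \sqrt{\theta_{aa}\theta_{bb}}/(d_a d_b)$. Forming the ratio, the common factor $d_a d_b$ cancels and $\tilde\omega_{ab} = -\tilde\theta_{ab}/\sqrt{\tilde\theta_{aa}\tilde\theta_{bb}} = \omega_{ab}$, giving the invariance of $\boldsymbol\Omega$.

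The computation is elementary and I anticipate no genuine obstacle; the only point requiring care is positivity. The diagonal entries $d_a$ must be strictly positive (a genuine change of scale) so that $\mathbf{D}^{-1}$ exists, and the square roots must be taken of positive quantities, which is guaranteed since $\boldsymbol\Theta$ is positive definite and hence $\theta_{aa}>0$ for every $a$. I would state this positivity as the standing hypothesis on $\mathbf{D}$ and then present the three transformation identities above as the whole argument.
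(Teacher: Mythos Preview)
Your proposal is correct and follows essentially the same route as the paper: both derive $\tilde{\boldsymbol\Theta}=\mathbf{D}^{-1}\boldsymbol\Theta\mathbf{D}^{-1}$ and then show that the diagonal normalisation defining $\boldsymbol\Omega$ absorbs the $\mathbf{D}$-factors. The only cosmetic difference is that the paper carries out the cancellation at the matrix level (writing $\boldsymbol\Omega$ as a two-sided product with a diagonal scaling matrix and using that this scaling matrix itself picks up a factor $\mathbf{D}^{-1}$), whereas you do it entrywise; your explicit remark on the positivity of the $d_a$ and of the diagonal of $\boldsymbol\Theta$ is a welcome clarification that the paper leaves implicit.
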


\begin{proof}
The inverse variance matrix $\boldsymbol \Theta^*$ of $\mathbf{Y}^*= \mathbf{D} \mathbf{Y}$ is given by
$$\boldsymbol \Theta^* = (\mathbf{D} \boldsymbol \Sigma \mathbf{D})^{-1} = \mathbf{D}^{-1} \boldsymbol \Sigma^{-1} \mathbf{D}^{-1} = \mathbf{D}^{-1} \boldsymbol \Theta \mathbf{D}^{-1},$$
so $$\boldsymbol \Theta = \mbox{var}(\mathbf{Y})^{-1} \neq \mbox{var}(\mathbf{Y}^{*})^{-1} = \boldsymbol \Theta^*,$$
so $\boldsymbol \Theta$ is not invariant under rescaling of $\mathbf{Y}$. The conditional correlation matrix $\boldsymbol \Omega$ is invariant under rescaling, i.e $ \boldsymbol \Omega = \boldsymbol \Omega^*,$ since
\[
\boldsymbol \Omega^* = \boldsymbol \Sigma^{*-\frac{1}{2}}_0 \boldsymbol \Theta^* \boldsymbol \Sigma^{*-\frac{1}{2}}_0 = \boldsymbol \Sigma^{*-\frac{1}{2}}_0 \mathbf{D}^{-1} \boldsymbol \Theta \mathbf{D}^{-1}  \boldsymbol \Sigma^{*-\frac{1}{2}}_0 \boldsymbol \Sigma_0 = \boldsymbol \Sigma^{-\frac{1}{2}}_0 \boldsymbol \Theta \boldsymbol \Sigma^{-\frac{1}{2}}_0 = \boldsymbol \Omega,\]
and $\boldsymbol \Sigma_0^* = \boldsymbol \Sigma_0 \mathbf{D}^{-1}$.
\end{proof}
A factorial graphical model with structured conditional correlation matrix is obtained by restricting elements of $\boldsymbol \Omega$ such that:
\begin{itemize}
\item all diagonal elements of $\boldsymbol \Omega$ (inverse partial variances) must be identical, and
\item all partial correlations corresponding to edges in the same colour class must be identical.
\end{itemize}

Let $\boldsymbol \Theta = \boldsymbol \Sigma_0 \boldsymbol \Omega \boldsymbol \Sigma_0$ be the concentration matrix, where $\boldsymbol \Sigma_0 = \sum_{m=1}^{s_0} \mathbf{X}^{S^m_0} \theta^m$, and
\[
\boldsymbol \Omega = \mathbf{I} + \sum_{i=1}^{n_\Gamma-1}  \sum_{m =1}^{s_i}  \mathbf{X}^{S^m_i} \theta_m + \sum_{i =0}^{n_\Gamma-1} \sum_{l=m+1}^{n_c} \mathbf{X}^{N^l_i} \theta_l.
\]
A structured graphical model for the conditional correlation matrix is obtained by partitioning $\boldsymbol \Omega$, i.e.:
\[
\boldsymbol \Omega = \sum_{m =1}^{n_C} \mathbf{X}^{(m)} \omega_m.
\]
Elements of the conditional correlation matrix $\boldsymbol \Omega$ are related to elements of the natural partitions $S_i, N_i$. Note that edges with the same colours correspond to specific elements of the conditional correlation matrix which are constrained to be equal. Restrictions are non linear in the conditional correlation matrix and an iterative algorithm is considered in subsection \ref{sec:Penalized likelihood for coloured graphs} to estimate $\boldsymbol \Theta$ such that specific elements in the conditional correlation matrix $\boldsymbol \Omega$ are constrained to be equal.

\paragraph{Example 2: Educational study.}Consider the following factorial graphical model for educational study dataset (see Section \ref{sec:motivating-examples} for the description):
\[
[S_0 \prec F_{\Gamma T}, N_0 \prec F_{\Gamma}, S_1 \prec F_1, N_1 \prec 0].
\]
Table \ref{estconcentration} shows the estimated conditional concentration elements (upper triangular) and conditional covariance elements (lower triangular and diagonal). Note that conditional correlation elements are constraint to be equal while concentration elements are different.
\begin{table}
\caption{\label{concentration}Estimated conditional correlations (upper triangular) and conditional covariance (lower triangular and diagonal). The number of parameter estimated is 8 for the diagonal elements (natural partitions $S_0$), 6 for the natural partition $N_0$ and 4 for the natural partition $S_1$. The total number of estimated parameter is 16, that is the number of colours in the graph is 16. Note that the following model $S_0 \prec F_{\Gamma T}, N_0 \prec F_\Gamma, S_1 \prec F_1$ with the rest $S_i, N_i \prec 0$ has been imposed.}
\centering
\fbox{%
\begin{tabular}{rrrrrr|rrrr}
 Time &\multicolumn{5}{c}{ \em 1} &\multicolumn{4}{c}{ \em 	2}  \\
& Subject& \em Contr & \em Auton & \em Infl & \em Prox & \em Contr & \em Auton & \em Infl &\em Prox \\
  \hline
 \multirow{4}{*}{\em 1} & \em Contr & 1.32 & \em 0.13 &\em 0.21 &\em -0.14 &\em 0.34 &\em 0 &\em 0 &\em 0 \\
& \em  Auton & -0.15 & 1.08  &\em -0.07 &\em -0.07 &\em 0 &\em 0.34 &\em 0 &\em 0 \\
& \em Infl &-0.26 & 0.08 & 1.18 & \em -0.07 &\em 0 &\em 0 &\em 0.34 &\em 0 \\
& \em  Prox &0.17 & 0.08 & 0.08 & 1.13  & \em 0 &\em 0 &\em 0 &\em 0.34 \\
\hline
\multirow{4}{*}{\em 2} & \em Contr & -0.43 & 0 & 0 & 0 & 1.22 &\em 0.13 &\em 0.21 &\em -0.14 \\
& \em   Auton &0 & -0.37 & 0 & 0 & -0.15 & 1.12 &\em -0.07 &\em -0.07 \\
& \em Infl & 0 & 0 & -0.40 & 0 & -0.25 & 0.08 & 1.18&\em -0.07 \\
& \em Prox &  0 & 0 & 0 & -0.38 & 0.16 & 0.08 & 0.08 & 1.14 \\
\end{tabular}}
\end{table}

\section{Penalized maximum likelihood}
\label{sec:pmle}

We have reached some important results by considering constraints on the concentration (or conditional correlation) matrix. The number of parameters to be estimated can be considerable reduced. Each sub-network can be interpreted as its corresponding natural partition. However, dynamic genetic graphs are usually sparse which means that few vertices will be connected. We have given the concepts of sparse and dense graphs in Section \ref{sec:fgl} when the number of nodes tends to infinity. A measure of the density of a graph in the finite case can be defined as follow. For undirected graphs, density is:
$$\varrho= \frac{2|\Gamma T|}{|\Gamma T|\,(|\Gamma T| - 1)}.$$
The maximum number of edges is $\frac{1}{2} |\Gamma T| (|\Gamma T|-1)$, so the maximal density is 1 (for complete graphs) and the minimal density is 0.

We could think to estimate a complete graph and to produce multiple hypothesis testing on the edges of the graph. However, model selection and parameter estimation would be done separately in this case and it would bring at instability of the model \cite{breiman1996heuristics}. Alternatively, we consider $\ell_1$-norm penalty on the concentration (or conditional correlation) matrix to induce sparsity. The choice of  a $\ell_1$-norm can be considered the unique one since other $\ell_p$-norm, where $p$ is typically in the range $[0,2]$, are not suitable in high-dimensional data analysis. The estimates being exactly zero for $p\leq 1$ only, while the optimization problem is convex for $p\geq 1$. Hence $\ell_1$-norm occupies a unique position, as $k = 1$ is the only value of $k$ for which variable selection takes place while the optimization problem is still convex and hence feasible for high-dimensional problems \cite{banerjee2008model}.

\subsection{Penalized likelihood for coloured graphs}
\label{sec:Penalized likelihood for coloured graphs}

Consider a coloured graph that specifies a partition of $\boldsymbol \Theta$, then a set of design matrices $\mathbf{X}$ is used to produce linear operators which are necessary to impose linear restrictions on $\boldsymbol \Theta$.  Every design matrix $\mathbf{X}^{(m)}$, $m = 1, \ldots, n_c$ consists of zeroes and ones and induce $p-1$ linear constraints on $\boldsymbol \Theta$ when the number of 1 in $\mathbf{X}^{(m)}$ is $p$. We define a linear map $\mathbf{A} = (\mathbf{A}_1, \ldots, \mathbf{A}_{n_p})$ where each $\mathbf{X}^{(m)}$ induces a set of matrices $\mathbf{A}_1, \ldots, \mathbf{A}_{n_m}$ and an element in $\mathbf{A}^{(i)}$, $i=1, \ldots, n_p$ assumes value $-1$, $0$ or $1$ as described below:

$$\mathbf{a}^{(i)}_{jt,gs} =  \left\{ \begin{array}{rl}
  1 & \mbox{if} \ \ \sum_{jg} \sum_{ts} \mathbf{X}^{(i)}_{jt, gs} \mbox{ before } jt, gs = p-1 \\
 -1 & \mbox{if} \ \ \sum_{jg} \sum_{ts} \mathbf{X}^{(i)}_{jt, gs} \mbox{ before }  jt, gs = p \\
  0 & \mbox{otherwise},
\end{array} \right.
$$
where `before' is meant with respect to the total row-major ordering of the matrices. Now let $n_p$ be the total number of linear constraints, then $\mathbf{A} = \{\mathbf{A}_1, \ldots, \mathbf{A}_{n_p}\}$ and the linear map is expressed as $\mathbf{A}: \mathbb{R}_{\mbox{sym}}^{\Gamma T} \rightarrow \mathbb{R}^{n_p}$ with:
\[
\mathbf{A}(\boldsymbol \Theta) = [\ab{\mathbf{A}_1}{\boldsymbol \Theta}, \ldots, \ab{\mathbf{A}_{n_p}}{\boldsymbol \Theta}]
\]
where $\ab{}{}$ is the Frobenius inner product.

When we derive the objective function we want to take into account the sparsity assumption, i.e. the sum of the absolute values of the precision matrix  should be less than $\rho$. The $\ell_1$-norm constraint $||\boldsymbol \Theta|| \leq \rho$ can be written as a set of linear equality constraints by introducing slack variables $\mathbf{x}^+, \mathbf{x}^- \in \mathbb{R}^k$, i.e.
\begin{eqnarray}
\label{eq:logslack}
(\hat{\boldsymbol \Theta}) &:=& \mathop{\mbox{argmin}}_{\boldsymbol \Theta} \{-l(\boldsymbol \Theta) + \lambda \mathbf{x}^+ + \lambda \mathbf{x}^- \}\\
\nonumber
&&\\
\nonumber
\mbox{subject to}  &&\mathbf{B}(\boldsymbol \Theta) - \mathbf{x}^+ + \mathbf{x}^- = \mathbf{0}\\
\nonumber
&&\boldsymbol \Theta \succeq 0, \mathbf{x}^+, \mathbf{x}^- \geq 0.
\end{eqnarray}
where $\mathbf{B}_i$, for $i = 1, \ldots k$, are symmetric matrices with just element $b_{g,t} = b_{t,g} =1$ and the rest equal to zero, and $k = |\Gamma T| (|\Gamma T| + 1)/2$ or $k = |\Gamma T| (|\Gamma T| - 1)/2$ if diagonal elements are penalized or not, respectively.

Now we want to include the set of linear constraints $\mathbf{A}(\boldsymbol \Theta)$ which are derived by imposing a specific coloured graph. To achieve sparse graph structures with a structured a priori graph one can minimize a convex log-likelihood function, i.e.:
\begin{eqnarray}
\label{eq:loglikpen34}
(\hat{\boldsymbol \Theta}) &:=& \mathop{\mbox{argmin}}_{\boldsymbol \Theta} \{-l(\boldsymbol \Theta) + \lambda \mathbf{x}^+ + \lambda \mathbf{x}^- \}\\
\nonumber
&&\\
\nonumber
\mbox{subject to}  &&A(\boldsymbol \Theta) = \mathbf{0}\\
\nonumber
 &&\mathbf{B}(\boldsymbol \Theta) - \mathbf{x}^+ + \mathbf{x}^- = \mathbf{0}\\
\nonumber
&&\boldsymbol \Theta \succeq 0, \mathbf{x}^+, \mathbf{x}^- \geq 0.
\end{eqnarray}
We have re-written the convex optimization problem in the standard form. This is a quadratic semi-definite log-determinant programming problem which allows to impose factorial graphical models.

The non linearity of the objective function and the positive definiteness of the constraint make the optimization problem (\ref{eq:loglikpen34}) not trivial. We use an algorithm called LogdetPPA to find a solution of (\ref{eq:loglikpen34}). LogdetPPA employs the essential ideas of the proximal point algorithm (PPA), the Newton method and the preconditioned conjugate gradient solver \citep{wang2009solving}.

Note that LogdetPPA algorithm was developed into Matlab and it gives the opportunity to solve convex optimization problems with linear constraints which need to be implemented. We implemented linear constraints for factorial graphical models as described in Section \ref{sec:fgl}. Moreover, it is possible to use function of Matlab within R. In fact, the package $\tt{R.Matlab}$ allows to connect Matlab and R. R is more suitable for statistical analysis and it is an open source software so source codes of packages are available. We took advantage from $\tt{R.Matlab}$ to create a virtual connection between R and Matlab so that we are able to solve the constraint optimization problem within R.
\paragraph{LogdetPPA optimization of FGL concentration model.} For FGL$_\Theta$ problems, we can apply the algorithm logdetPPA after having create the linear operators $\mathbf{A}$ and $\mathbf{B}$. Then, $\boldsymbol \Theta$ is the matrix that minimizes the penalized log-likelihood (\ref{eq:loglikpen34}) among the space of all symmetric $\Gamma T \times \Gamma T$ matrices for whom the non linear restriction on $\boldsymbol \Theta$ holds. For example we used FGL$_\Theta$ to estimate the matrix represented in Table \ref{estconcentration}.
\paragraph{LogdetPPA optimization of FGL conditional correlation model.} This algorithm allows to introduce linear constraint but in case we are modelling conditional correlations the constraints are not linear. However, we overcome this problem by using an iterative algorithm which make use of logdetPPA after having found an initial guess for $\mbox{diag}(\boldsymbol \Theta)$. The pseudo-code is described in Algorithm \ref{ital}.
\begin{algorithm}
\caption{Calculate sparse $\boldsymbol \Theta$ with structured on $\boldsymbol \Omega$}
\label{ital}
\begin{algorithmic}
\REQUIRE Coloured graph $S_i \prec *, N_i \prec *$ and set an initial vector $\boldsymbol \Sigma_0$.
\begin{enumerate}
\item Find the linear maps $\mathbf{A}_1, \ldots, \mathbf{A}_m$.
\item $k = 0, \ldots, $.
\item Estimate $\boldsymbol \Theta^{(k)}$.
\item Set $\boldsymbol \Sigma_0 = \mbox{diag}(\boldsymbol \Theta^{(k)})$.
\item Replay $\boldsymbol \Sigma_0^{(k)}$ with $\boldsymbol \Sigma_0^{(k+1)}$ and estimate $\hat{\boldsymbol \Theta}^{(k+1)}$.
\item If {$|| \hat{\boldsymbol \Theta}^{(k+1)} - \hat{\boldsymbol \Theta}^{(k)}||_1 < \epsilon$} Stop \\
	  else $k = k+1$, go to 3 \\
      end.
\end{enumerate}
\end{algorithmic}
\end{algorithm}
Usually, the convergence is reached after few iterations (4-10). By taking a starting point $\boldsymbol \Sigma_0$ the optimization problem (\ref{eq:loglikpen34}) is a convex optimization problem, i.e., the objective function is convex on $\boldsymbol \Theta$, and the feasible region is convex.

\section{Model selection}
\label{sec:modelselection}
We have seen that estimation of $\boldsymbol \Theta$ considering different coloured graphs given a smoothing parameter $\lambda$ is possible inside a convex optimization framework. We have also proposed several factorial graphical models. In this subsection we address same issues on how to choose the 'best' coloured graph, and what should be a good compromise between a sparse and a dense graph. In particular, according to \cite{meinshausen2010stability} we want to find a smoothing parameter such that the expected number of false positive links is taken under control. This aim can be reached through stability selection. Stability selection is similar to bootstrap idea which consists of a re-sampling procedure.  For example we used FGL$_\Omega$ to estimate the matrix represented in Table \ref{concentration}.

\subsection{Classical approaches}
Let's assume that the smoothing parameter is fixed (point-wise control) $\lambda = \lambda_{opt}$ and two coloured graphs need to be compared:
\[
[S_0 \prec F_{\Gamma T}, N_0 \prec F_{\Gamma}, S_1 \prec F_1, N_1 \prec 0],
\]
and
\[
[S_0 \prec F_{\Gamma T}, N_0 \prec F_{\Gamma}, S_1 \prec F_{\Gamma}, N_1 \prec 0].
\]
An information criterion such as AIC, BIC and AICc can be used to compare different factorial graphical models. AIC typically will select more and more complex models as the sample size increases, because the maximum log-likelihood increases linearly with $n$ while the penalty term for complexity is proportional to the number of degrees of freedom.  Note that for factorial graphical models the number of degrees of freedom is approximated by the number of "free" parameters different from zero, i.e.  the estimated elements different from zero which do not belong to the same partition. AICc penalizes complexity more strongly than AIC, with less chance of over-fitting the model. BIC is constructed in a manner quite similar to AIC with stronger penalty for complexity \citep{claeskens2008model}.

\subsection{Stability selection}

Usually, the choice of smoothing parameter $\lambda$ is crucial as it leads to the structure of the network. In particular, if $\lambda = 0$ and there are no constraints, the maximum likelihood estimator is $\hat{\boldsymbol \Theta} = \mathbf{S}^{-1}$, if $\lambda \rightarrow \infty$, $\hat{\boldsymbol \Theta}$ is diagonal which means all random variables are independent.
Generalized cross validation can be used to select the tuning parameter $\lambda = \lambda_{cv}$ but \cite{leng2006note} showed that given $\lambda_{cv}$ the estimator of $\boldsymbol \Theta$ is not consistent in terms of variables selection. Alternatively, bootstrap \citep{breiman1999prediction} can be considered to estimate empirical distribution of each element of $\hat{\boldsymbol \Theta}$. We prefer stability selection \citep{meinshausen2010stability} because the expected number of links falsely estimated is controlled and variables selection is consistent. Moreover, the choice of a smoothing parameter $\lambda$ becomes less important \cite{meinshausen2010stability}. We adapt stability selection to factorial graphical models.

Let's consider a vector of smoothing parameters such that $\lambda \in \Lambda \subseteq \mathbb{R^+}$ that determines the amount of regularization.
\begin{theorem}
\label{propfast}
A necessary and sufficient condition for $\theta_{ij,kl} = 0$ for all $i,k \in \Gamma$ and $j,l \in T$ is that $|S_{ij, kl}| \leq \lambda$ for all $i \neq k$ and $l \neq j$ \cite{mazumder2011exact}.
\end{theorem}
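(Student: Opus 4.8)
The plan is to read the claim off the first-order (KKT) optimality conditions of the $\ell_1$-penalized Gaussian log-likelihood, using the fact that the objective in (\ref{eq:loglikpen34}) is convex on the positive-definite cone, so that stationarity is both necessary and sufficient for a global minimizer. Writing the objective (up to positive constants) as $f(\boldsymbol\Theta) = -\log\det\boldsymbol\Theta + \ab{\mathbf{S}}{\boldsymbol\Theta} + \lambda\|\boldsymbol\Theta\|_1$, I would use that the gradient of $-\log\det\boldsymbol\Theta$ is $-\boldsymbol\Theta^{-1}$, that $\ab{\mathbf{S}}{\boldsymbol\Theta}$ contributes $\mathbf{S}$, and that the subdifferential of $\|\cdot\|_1$ at $\boldsymbol\Theta$ is the set of symmetric $\mathbf{Z}$ with $z_{ij,kl}=\operatorname{sign}(\theta_{ij,kl})$ when $\theta_{ij,kl}\neq 0$ and $z_{ij,kl}\in[-1,1]$ when $\theta_{ij,kl}=0$. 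Hence $\hat{\boldsymbol\Theta}\succ 0$ is optimal if and only if there is such a $\mathbf{Z}$ with
\[
-\hat{\boldsymbol\Theta}^{-1} + \mathbf{S} + \lambda\mathbf{Z} = \mathbf{0}.
\]
Setting $\mathbf{W}=\hat{\boldsymbol\Theta}^{-1}$, this reads $w_{ij,kl} = s_{ij,kl} + \lambda z_{ij,kl}$ entrywise, and I would use this single equation as the engine for both directions.

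For necessity, suppose $\hat{\boldsymbol\Theta}$ is diagonal. Then $\mathbf{W}=\hat{\boldsymbol\Theta}^{-1}$ is diagonal as well, so for every off-diagonal pair the stationarity equation yields $0 = s_{ij,kl} + \lambda z_{ij,kl}$. Because $\hat\theta_{ij,kl}=0$ off the diagonal, any admissible subgradient satisfies $|z_{ij,kl}|\le 1$, whence $|s_{ij,kl}| = \lambda|z_{ij,kl}| \le \lambda$, which is exactly the stated threshold condition.

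For sufficiency, assume $|s_{ij,kl}|\le\lambda$ for all off-diagonal entries, and I would exhibit an explicit diagonal certificate and verify the optimality system. Take $\hat{\boldsymbol\Theta}$ diagonal with entries satisfying the diagonal stationarity equation (for instance $\hat\theta_{ij,ij} = 1/(s_{ij,ij}+\lambda)$ when the diagonal is penalized, or $1/s_{ij,ij}$ otherwise), which is positive definite since each $s_{ij,ij}>0$. For the off-diagonal equations set $z_{ij,kl} = -s_{ij,kl}/\lambda$; the hypothesis forces $|z_{ij,kl}|\le 1$, so $\mathbf{Z}$ is a legitimate element of $\partial\|\hat{\boldsymbol\Theta}\|_1$ (off-diagonally $\hat\theta_{ij,kl}=0$ permits any value in $[-1,1]$), and $w_{ij,kl}=0=s_{ij,kl}+\lambda z_{ij,kl}$ holds. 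Since $f$ is convex, any point meeting these conditions is a global minimizer, so the solution is diagonal, i.e. $\hat\theta_{ij,kl}=0$ off the diagonal.

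The main obstacle is the sufficiency direction: I must produce a pair $(\hat{\boldsymbol\Theta},\mathbf{Z})$ that is simultaneously primal feasible and positive definite, consistent with the diagonal stationarity equations, and equipped with off-diagonal subgradient entries lying in $[-1,1]$. This last requirement is precisely where the hypothesis $|s_{ij,kl}|\le\lambda$ is consumed, and the boundary case $|s_{ij,kl}|=\lambda$ (admissible since the inequality is non-strict) should be checked to still give a valid subgradient. A minor bookkeeping point is the convention on whether the diagonal is penalized, which only alters the diagonal values of the certificate and leaves the off-diagonal argument untouched; one should also note the routine factor-of-two in differentiating $\ab{\mathbf{S}}{\boldsymbol\Theta}$ over symmetric matrices, which cancels and does not affect the off-diagonal thresholding condition.
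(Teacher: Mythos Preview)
The paper does not actually prove this statement; it merely quotes the result and attributes it to \cite{mazumder2011exact}. Your KKT-based argument is correct and is essentially the same route taken in that reference: write the subgradient stationarity condition $-\hat{\boldsymbol\Theta}^{-1}+\mathbf{S}+\lambda\mathbf{Z}=\mathbf{0}$, observe that a diagonal $\hat{\boldsymbol\Theta}$ forces the off-diagonal entries of $\hat{\boldsymbol\Theta}^{-1}$ to vanish, and then read off the threshold $|s_{ij,kl}|\le\lambda$ from $|z_{ij,kl}|\le 1$; conversely, build the explicit diagonal certificate. One small point worth making explicit in your write-up is that $-\log\det(\cdot)$ is \emph{strictly} convex on the positive-definite cone, so the minimizer is unique and hence your diagonal candidate is \emph{the} solution rather than merely \emph{a} solution---this is what licenses the conclusion $\hat\theta_{ij,kl}=0$ rather than just the existence of a diagonal optimum.
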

Upper and lower bounds of $\lambda$, $u_\lambda$ and $l_\lambda$ respectively, are calculated such that $u_\lambda = max(|S_{ij, kl}|)$  and $l_\lambda = min(|S_{ij,kl}|)$ for $i \neq k$ $l \neq j$, where $i, k\in \Gamma$ and $j,l \in T$. Then, for all $\lambda > u_\lambda$ an empty graph is estimated while for $\lambda < l_\lambda$ a fully connected graph is estimated. We search the solution of the optimization problem (\ref{eq:loglikpen34}) for value of $\lambda$ into the range $[l_\lambda, u_\lambda]$. The ``optimal'' value of $\lambda = \lambda_{opt}$ can be chosen by minimizing a score which measures the goodness-of-fit.

Let graph $\hat{G} = (V,  \hat{E})$ be inferred, where
$$\hat{E}_{\lambda_{opt}} = \{(v_i,v_j): \hat{\theta}_{i,j} \neq 0 \}$$
is the estimated edge set and
$$E = \{(v_i,v_j): \theta_{i,j} \neq 0\}$$ denotes the active set. Let $I = \{1, \ldots, n\}$ be the index set for sample $\mathbf{y}^{(i)}$, $i \in I$, then:
\begin{algorithm}
\caption{Stability selection for graphical models.}
\label{italb}
\begin{algorithmic}
\REQUIRE n.
\begin{itemize}
\item Draw sub-samples of size [n/2] without replacement, denoted by $I^* \subset \{1, \ldots, n\}$, where $|I^*| = [n/2]$.
\item Run the selection algorithm $\hat{E}_{\lambda_{opt}}(I^*)$ on $I^*$.
\item Do these steps many times and compute the relative frequencies, $$\hat{\Pi}^{\lambda_{opt}}_{ij} = P^*((v_i, v_j) \in \hat{E}_{\lambda}), \mbox{for} \ \  i,j = 1, \ldots, p.$$
\end{itemize}
\end{algorithmic}
\end{algorithm}
The set of stable edges is indicated as
$$\hat{E}_{stable} = \{(v_i,v_j): \hat{\Pi}_{ij}^{\lambda} \geq \pi_{thr}\},$$
and it depends on $\lambda_{opt}$ via $\hat{\Pi}_{ij}^{\lambda_{opt}}$. The tuning parameter $\pi_{thr}$ indicates a threshold and controls the expected number of falsely selected links. Assume that the joint distribution of the random variables is exchangeable and $\hat{E}$ is a better choice than a random guessing, then it can be shown that
$$\mathbb{E}(FP) \leq \frac{1}{2 \pi_{thr} - 1} \frac{q^2}{k},$$
where $k$ is the dimension of the model (it depends on the factorial model), $q$ is the number of selected variables (e.g. $|\hat{E}|$), and $FP = |E^c \cap \hat{E}_{stable}|$ is the number of falsely positive selected. This is a finite sample control, even if $k >> N$. Choose $\mathbb{E}(FP) \leq v$, then if $q^2 \leq v k$:
$$\pi_{thr} = (1 + q^2/v k)/2,$$
and $\pi_{thr} \in (\frac{1}{2}, 1)$ is bounded.

 \begin{figure}
    \centering
    \makebox{\includegraphics[scale=0.8]{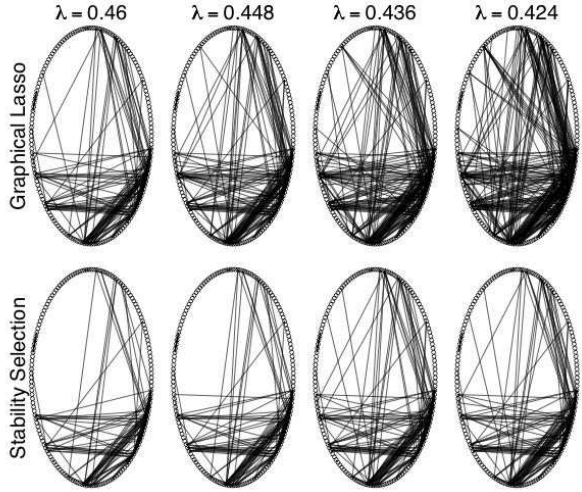}}
\caption{\label{stabgl111}Graph selection with cross validation and stability selection procedure.}
\end{figure}

Figure \ref{stabgl111} is taken from \cite{meinshausen2010stability} and it illustrates that the choice of a tuning parameter $\lambda$ is less important with stability selection than cross validation.

\section{Simulation study}
\label{sec:simulationstudy}

We considered a simulation study to show the performance of the proposed model. Table \ref{tab:simulstudyscheme0} shows the simulation study scheme in which four different scenarios are studied. Here for different scenarios we mean that the number of nodes, links or time points change while the structure of the networks is the same.
\begin{table}
\caption{\label{tab:simulstudyscheme0}Simulation study scheme in which four scenarios are represented. The first column is an identification number, the second one indicates the number of variables per each time point (third column). The number of independent samples are represented in the last column.}
\centering
\fbox{%
\begin{tabular}{*{6}{c}}
ID&g &g*& t&  p & n\\
\hline
1&20& 0 & 3 & 60&50 \\
2&- & 20& - & 120& -\\
3&- & 40& - & 180& -\\
4&- & 60& - & 240& -\\
\end{tabular}}
\end{table}

We are mainly interested in the performance of our estimator in terms of false positive (FP), false negative (FN), true positive (TP) and true negative (TN). Let $\boldsymbol \Theta$ be the true and $\hat{\boldsymbol \Theta}$ be the estimated precision matrix. These measure summarize:
\begin{itemize}
\item the percentage of links that are falsely estimated, i.e. an element in $\boldsymbol \Theta$ is zero but is not zero in $\hat{\boldsymbol \Theta}$ (FP) and an element in $\boldsymbol \Theta$ is not zero but is zero in $\hat{\boldsymbol \Theta}$ (FN),
\item the percentage of links that are positively estimated, i.e. an element in $\boldsymbol \Theta$ is not zero and it is not zero in $\hat{\boldsymbol \Theta}$ (TP) and  an element in $\boldsymbol \Theta$ is zero and it is zero in $\hat{\boldsymbol \Theta}$ (TN).
\end{itemize}
Other measure like these are the false discovery and false not discovered.

Whereas, we are not looking at the performance of our estimator in terms of distances between the "true" parameter $\boldsymbol \Theta$ and the estimated ones $\hat{\boldsymbol \Theta}$. Another element of interest is the $sign$ of the estimated conditional covariance. In fact, this element is not zero, if $-sign(\theta_{ij})$ is positive this indicates a positive dependence whereas if $-sign(\theta_{ij})$ is negative it indicates a negative dependence.

For each scenario we simulate 100 datasets from a multivariate normal distribution with $\boldsymbol \mu$ equal to zero and $\boldsymbol \Sigma$ equal to the inverse of a precision matrices $\boldsymbol \Theta$ where the coloured graph is simulated from the following model:
\[
[S_0 \sim F_{\Gamma T}, N_0 \sim  F_{\Gamma}, S_1 \sim  F_1],
\]
while the rest is zero. Note that we keep the true network constant but we increase the number of nodes in the graph. Random variables associated with these added nodes are independent. We keep the number of replicates and time points constants. The number of replicates is fewer than the number of random variables.
\begin{table}
\caption{\label{tab:perfomcheck0}The average of the proportions of how many links have been correctly estimated were calculated by the False Positive (FP), False Negative (FN), False Discovery (FD) and False not Discovery (FnD).}
\centering
\fbox{%
\begin{tabular}{|rr|rrrr|}
  \hline
   & &  $\overline{FP}$ & $\overline{FN}$ & $\overline{FD}$ & $\overline{FnD}$  \\
  \hline
 &\color{red}{AICc} & 0.0092 & 0.0811 & 0.2000 & 0.0031  \\
 1&BIC & 0.0363 & 0.0139 & 0.4873 & 0.0005 \\
 &AIC & 0.0698 & 0.0069 & 0.6470 & 0.0003  \\
  \hline
 &\color{red}{AICc}  & 0.0057 & 0.0447 & 0.2899 & 0.0006  \\
2 &BIC & 0.0088 & 0.0321 & 0.3826 & 0.0005  \\
 &AIC & 0.0437 & 0.0041 & 0.7514 & 0.0001 \\
   \hline
  &\color{red}{AICc} & 0.0016 & 0.4585 & 0.2730 & 0.0036  \\
 3& BIC  & 0.0016 & 0.4585 & 0.2730 & 0.0036  \\
  &AIC  & 0.0288 & 0.1452 & 0.8088 & 0.0012 \\
\hline
& \color{red}{AICc} & 0.0091 & 0.1034 & 0.1680 & 0.0052 \\
4&  BIC & 0.0396 & 0.0517 & 0.4527 & 0.0027  \\
&AIC  & 0.0670 & 0.0000 & 0.5704 & 0.0000 \\
\hline
\end{tabular}}
\end{table}
Table \ref{tab:perfomcheck0} shows the average over 100 of measures that is $\overline{FP} = \sum_{i=1}^{100} FP_i$, $\overline{FD} = \sum_{i=1}^{100} FD_i$ and so on. Moreover we show that AICc performs better on average and other graphical lasso such us proposed by \cite{tibshirani1996regression} does not perform well in case of structured dynamic graphical models (see Table \ref{tab:averperf0}).

\begin{table}
\caption{\label{tab:averperf0}Average performance for neighbourhood selection model, graphical lasso and structured graphical lasso.}
\centering
\fbox{%
\begin{tabular}{|rr|rrrr|}
  \hline
              &        & $\overline{FP}$ & $\overline{FN}$ & $\overline{FD}$ & $\overline{FnD}$  \\
  \hline
            1 & glasso   &  0.002 & 0.977 & 0.667  & 0.035 \\
              & FGL &  0.006 & 0     & 0.137  & 0     \\

  \hline

            2 & glasso   & 0.001  & 0.964 & 0.620 & 0.013\\
              & FGL & 0.005  & 0     & 0.263 & 0   \\
\hline
            3 & glasso   & 0.001 & 0.988 &  0.906& 0.008 \\
              &FGL  & 0.001 & 0.458 & 0.273 & 0.004 \\
              \hline
\end{tabular}}
\end{table}

\section{FGL data analysis}
\label{sec:dataanalysis}
%\subsection{T-cell data analysis}
We have seen that several factorially coloured graphs can be imposed on a graphical model and a model selection procedure is necessary to select the ``best'' dynamic graph. Moreover, a smoothing parameter $\lambda$ that regulates the sparsity needs to be selected.

Let's consider several coloured graphs for Human T-cell dataset. Information criterion measures, for each of the top ten models, are showed in Table \ref{modelsel4}.
\begin{table}
\caption{\label{modelsel4}Model ordering according to AICc for T-cell dataset}
\centering
\fbox{%
\begin{tabular}{*{9}{c}}
 &\multicolumn{3}{c}{ \em Model}    & &AIC     &  AICc   \\
  \hline
  $S_0 \sim F_1 $& $ N_0 \sim F_\Gamma $&    $S_1 \sim F_T$&$ N_1 \sim F_\Gamma   $  &$ S_2 \sim F_T  $  &175.82 & 178.19 \\
  $S_0 \sim F_1  $&$ N_0 \sim 0 $&    $S_1 \sim F_\Gamma$&$  N_1 \sim F_\Gamma  $  & $ S_2 \sim 0 $  & 175.84 & 178.42\\
  $S_0 \sim F_1$&   $N_0 \sim F_{\Gamma T}$&$ S_1 \sim F_\Gamma$&$  N_1\sim F_{\Gamma T}  $  & $ S_2 \sim 0 $  & 176.32 & 178.60  \\
  $S_0 \sim F_1$&   $N_0 \sim F_\Gamma$& $S_1 \sim F_1$&$  N_1\sim 0$  & $ S_2 \sim 0 $  &176.78 & 178.93 \\
  \hline
  $S_0 \sim 1$&   $N_0 \sim F_\Gamma$& $S_1 \sim F_\Gamma$&$ N_1\sim F_\Gamma    $  &$ S_2 \sim 0 $  & 177.04 & 179.19\\
  $S_0 \sim 1$&   $N_0 \sim F_{\Gamma T} $& $S_1 \sim F_1    $&  $  N_1\sim F_{\Gamma T}  $  & $ S_2 \sim 0 $  &176.81 & 179.81 \\
\end{tabular}}
\end{table}
We select the first coloured graph since it has the smallest AICc. This model is described in Figure \ref{modelfortcell}.
\begin{figure}
\centering
\makebox{\includegraphics[scale=0.55]{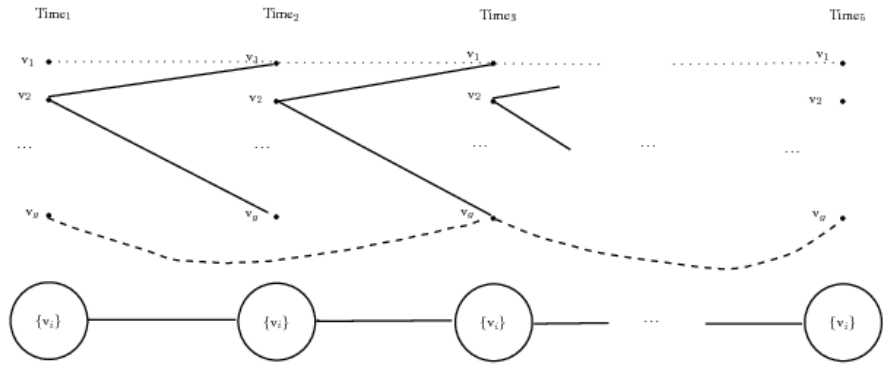}}
\caption{\label{modelfortcell}Selected model after model selection for T-cell dataset.}
\end{figure}
According to the coloured graph this scheme summarize the characteristics of the estimated graph. The networks at temporal lag 0 are constrained to be equal across the five observed time points. Moreover, the networks at temporal lag 1 are constrained to be equal across time. There are no links between time $t$ and $t+2$ since we are assuming conditional independence between time $t$ and $t+2$ except for self interactions, i.e. interactions between the same couple of genes.
Figure \ref{graphN0} shows interactions between genes at lag 0 (left part of the figure), and it shows interactions between genes at lag 1 (right part of the figure).
%\begin{figure}
%\centering
%\includegraphics[angle= 90, scale=0.6]{tcellN0.pdf}
%\includegraphics[angle= 90, scale=0.6]{tcellN1.pdf}
%\caption{Representation of interactions between genes at temporal lag 0. Note that networks at lag 0 at time $1,2,3,\ldots, 5$ are equal since we impose $N_0 \sim F_\Gamma$(left part). Representation of interaction between genes at temporal lag 1. Note that networks at lag 1 between time $(1,2), (2,3), (3,4), (4,5)$ are equal since we impose $N_1 \sim F_\Gamma$(right part).}\label{graphN0}
%\end{figure}

\begin{figure}[htbp]
\includegraphics[scale=0.5]{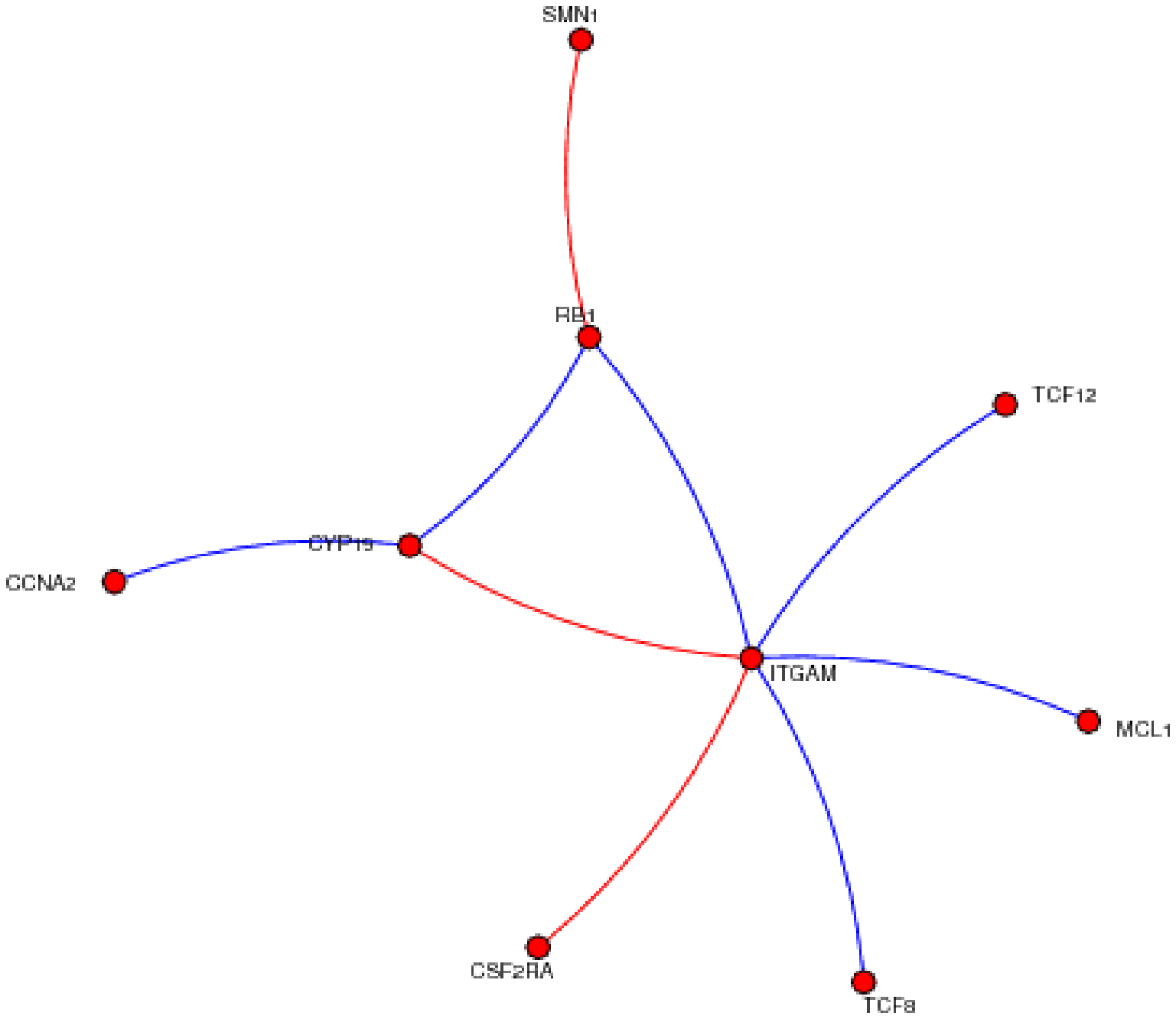}
\includegraphics[scale=0.5]{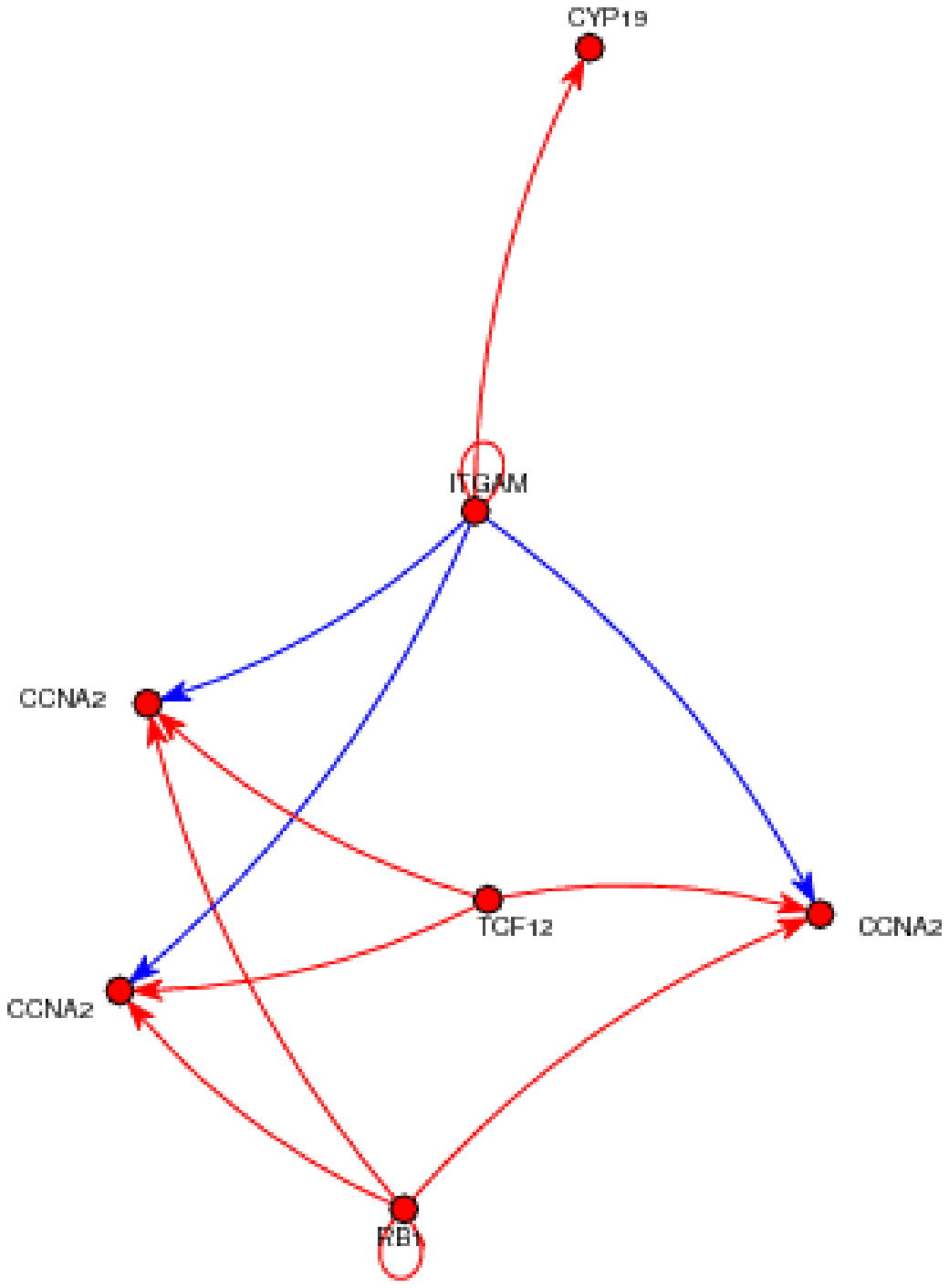}
 \centering
 \caption[Caption]{Representation of interactions between genes at temporal lag 0. Note that networks at lag 0 at time $1,2,3,\ldots, 5$ are equal since we impose $N_0 \sim F_\Gamma$(top). Representation of interaction between genes at temporal lag 1. Note that networks at lag 1 between time $(1,2), (2,3), (3,4), (4,5)$ are equal since we impose $N_1 \sim F_\Gamma$(bottom).}
 \label{graphN0}
 \end{figure}

%\subsection{Education study data analysis}

\section{Discussion}

As more and more large datasets become available, the need for efficient tools to analyze such data has become imperative. In this chapter, we have considered sparse dynamic Gaussian graphical models with $\ell_1$-norm penalty. This type of modelling offers a straightforward interpretation, i.e. the edges of the graph defining the partial conditional correlations among the nodes. In particular, under the sparsity assumption, a large part of the precision matrix can be filled with zeros a priori. Based on the consideration of dynamic and model-oriented definitions, we are able to reduce the number of parameters to be estimated. We have shown that FGL$_\Theta$ proved to be powerful on both simulated and real data analysis.

\end{document}